\begin{document}
%
\title{On the Statistical Multiplexing Gain of\\Virtual Base Station Pools}
%
%
%

\author{\IEEEauthorblockN{Jingchu Liu, Sheng Zhou, Jie Gong, Zhisheng Niu}
        \IEEEauthorblockA{Tsinghua National Laboratory for Information Science and Technology\\
        Department of Electronic Engineering, Tsinghua University\\
        Beijing 100084, China \\
        Email: liu-jc12@mails.tsinghua.edu.cn,  sheng.zhou@tsinghua.edu.cn\\ gongj13@mail.tsinghua.edu.cn, niuzhs@tsinghua.edu.cn\\}
        \and
        \IEEEauthorblockN{Shugong Xu}
        \IEEEauthorblockA{Intel Labs\\
        \\
        Beijing 100080, China\\
        Email: shugong.xu@intel.com}
        }

%
%



\maketitle

\newtheorem{kcr}{Theorem}
\newtheorem{rev}[kcr]{Theorem}
\newtheorem{lpl}[kcr]{Theorem}

\begin{abstract}
Facing the explosion of mobile data traffic, cloud radio access network {(C-RAN)} is proposed recently to overcome the efficiency and flexibility problems with the traditional {RAN} architecture by centralizing baseband processing. However, there lacks a mathematical model to analyze the statistical multiplexing gain from the pooling of virtual base stations (VBSs) so that the expenditure on fronthaul networks can be justified. In this paper, we address this problem by capturing the session-level dynamics of {VBS} pools with a multi-dimensional Markov model. This model reflects the constraints imposed by both radio resources and computational resources. To evaluate the pooling gain, we derive a product-form solution for the stationary distribution and give a recursive method to calculate the blocking probabilities. For comparison, we also derive the limit of resource utilization ratio as the pool size approaches infinity. Numerical results show that {VBS} pools can obtain considerable pooling gain readily at medium size, but the convergence to large pool limit is slow because of the quickly diminishing marginal pooling gain. We also find that parameters such as traffic load and desired Quality of Service {(QoS)} have significant influence on the performance of {VBS} pools.
\end{abstract}


%
\IEEEpeerreviewmaketitle

\section{Introduction}
\IEEEPARstart{I}{n} recent years, the proliferation of mobile devices such as smart phones and tablets, together with the applications enabled by mobile Internet, has triggered the exponential growth of mobile data traffic \cite{vni}. To accommodate the rapid traffic growth, cellular networks have been continuously evolving with smaller cell size, wider bandwidth, and more advanced transmission technologies. However, the problems that arise, such as the increased interference and operational costs, are difficult to solve with the traditional RAN architecture, in which the communication-related functionalities are packed into stand-alone base stations (BSs) and the cooperation between {BSs} is limited by the backhaul network.

To overcome the shortcomings of the traditional RAN architecture, cloud radio access network {(C-RAN)} \cite{cran} is proposed with centralized baseband processing. {C-RAN} can facilitate the adoption of cooperative signal processing and potentially reduce the operational costs. A similar idea is also proposed in \cite{wnc} under the name of wireless network cloud {(WNC)}. This kind of novel architectures has attracted substantial attentions. The key functionalities of {C-RAN} are investigated and its major use cases are identified in \cite{ngmn}. Centralized processing is also utilized in conjunction with dynamical fronthaul network switching to address the mobility and energy efficiency issues of small cell scenarios in \cite{colony,fluidnet}. Concerning about realization, it is demonstrated in \cite{vbs,cloudiq,bigstation} that the functions of base band units (BBUs) can be implemented in the form of virtual base station (VBS) software that runs on general-purpose-platform (GPP) servers. Compared with the implementations based on dedicated hardware and software, GPP-based implementation provides more flexibility in the deployment of new functionalities and the provisioning of computational resources.

A {VBS} pool can be constructed by consolidating multiple VBSs onto GPP servers. This can be accomplished through running VBS instances as multiple threads in the same operating system ({OS}) or running them in separate real-time virtual machines ({VMs}) through virtualization technology. VBSs that are consolidated in this way can share the computational capacity of a single server or a cluster of servers, depending on the implementation. VBS pooling can improve the utilization ratio of computational resources so that related costs can be reduced.

Despite the numerous advantages mentioned above, the massive bandwidth requirement of {C-RAN's} fronthaul network poses a serious challenge. It can be estimated that transmitting the baseband sample of a single $20$MHz {LTE} antenna-carrier ({AxC}) requires around 1Gbps link bandwidth \cite{cpri}. Therefore a {C-RAN} with large-scale centralization may incur enormous fronthaul expenditure and potentially cancel out the gains from pooling {VBSs}. Fortunately, it is observed in \cite{cloudiq} that substantial statistical multiplexing gain can be obtained even with small scale centralization. Yet these observations are obtained from simulations based on estimated data, and a mathematical model is also needed to provide a general guideline for the design of realistic {VBS} pools. To this end, a model for {VBS} pools are proposed in \cite{gomez13} under the assumption of dynamic resource management, in which the amount compuatational capacity allocated to each baseband task is recalculated each time a new baseband task arrive. However, dynamic resource management may incur prohibitive overhead in realistic systems due to the stringent timeliness of baseband processing \cite{cloudiq}. Hence, semi-dynamic resource management, in which resource management algorithms run on much larger time scales than the processing of user sessions, may be more realistic. In addition, the interaction between the computational and radio resources is also not addressed in existing works.

In this article, we take a novel approach to analyze the statistical multiplexing gain of VBS pools. Under the assumption of semi-dynamic resource management, we model the session-level dynamics of {VBS} pools with a continuous-time multi-dimensional Markov model constrained by both radio and computational resources. Thanks to the special structure and the reversibility of the proposed model, we derive a product-form expression for its stationary distribution and give a recursive method for computing the user session blocking probabilities. To compare with numerical results, we further derive the limit of computational resource utilization ratio when there are infinite VBSs in the pool. Numerical results show that system parameters including pool size, traffic load, and the quality of service ({QoS}) have significant influence on the performance of {VBS} pool, which provides important implications for realistic system design.

The rest of the paper is organized as follows. Section \ref{sec:model} introduces the proposed model and provides proof for its reversibility. Section \ref{sec:solution} derives the product-form stationary distribution, the expression for blocking probabilities, and the large pool limit of resource utilization ratio. In Section \ref{recur} we give a recursive method for computing session blocking probabilities under arbitrary parameters. Section \ref{sec:numerical} presents the numerical results and discusses their implications on realistic system design. And the paper is concluded in section \ref{sec:conclusion}.

\section{A Multi-Dlimentional Markov Model} \label{sec:model}
In this section, we introduce the proposed model and provide proof for its reversibility. We model a {VBS} pool with $M$ {VBSs}, which share a total of $N$ computational servers. Each {VBSs} is connected to a remote radio unit (RRU) and equipped with $K$ units of radio resources. For simplicity, hereafter we refer to radio and computational resources as r-servers and c-servers, respectively.
    \subsection{Arrival, Service, and Blocking}
    \label{subsec:arrival}
    User sessions arrive independently in the coverage area of these {VBSs} following identical independent Poisson processes with arrival rate $\lambda$, and are served independently with exponential service time with mean $\mu^{-1}$. We assume exponential service time basing on the assumption that the length of users' data queue are i.i.d exponentially distributed. Defining the number of sessions in the $m$-th {VBS} to be $k_{m}$, then the number of sessions in all the pooled {VBSs} can be described with an $M$-dimensional vector
    $\bm{k} = (k_1,\cdots,k_{m},\cdots,k_{M})^T.$
    Given the Markovian property of the arrival and service of user sessions, it is obvious that $\bm{k}$ is a continuous-time $M$-dimensional Markov chain.

    Each active user session simultaneously occupies a r-server and a c-server, and releases them after being served. When a user session arrives, the pool scheduler will monitor the number of r-servers and c-servers to decide whether or not to accept the session. The session is accepted when the number of r-servers in the serving {VBS} is less than $K$ and the number of c-servers in the pool is less than $N$. Otherwise the session is rejected by the scheduler. This blocking policy reflects the constraint by both radio and computational resources. By reserving enough radio and computational resources for active sessions, we can guarantee the {QoS} for active sessions will not degrade under fluctuating processing load. The {QoS} for all sessions is reflected by the overal blocking probability. Note although different VBSs may have different {QoS} requirements, we assume VBSs has the same blocking probability threshold for the simplicity of analysis. Taking the blocking policy into consideration, we get the set of possible states:
        \begin{equation}
        \label{stateset}
        \begin{aligned}
            \mathbb{K} = \{& \bm{k} \mid 0 \le k_1,\cdots,k_M \le K,   &0 \le \sum_{m=1}^{M} k_m \le N \}
        \end{aligned}
        \end{equation}

    \subsection{Transition Rates}
    The state of $\bm{k}$ changes as user sessions arrive and depart. With the assumptions we made for session arrival and service in \ref{subsec:arrival}, there can only be a single session arrival or departure at any epoch. Thus, only a single entry of $\bm{k}$ can change at any epoch, and the change is either $+1$ or $-1$. In other words, $\bm{k}$ is a multi-dimensional birth-and-death process. Then we get the transition rate from state $\bm{k^{(i)}}$ to state $\bm{k^{(j)}}$ as:
        \begin{equation}
        \label{transition}
        q_{\bm{k^(i)} \bm{k^{(j)}}} = \begin{cases}
        \lambda,         & \text{if $\bm{k^{(j)}}-\bm{k^{(i)}}= \bm{e_m}$}\\
        k_m^i \mu, & \text{if $\bm{k^{(j)}}-\bm{k^{(i)}}= -\bm{e_m}$}\\
        0,               & \text{otherwise}
        \end{cases}
        \end{equation}
    where states $\bm{k^{(i)}}, \bm{k^{(j)}} \in \mathbb{K}$, $k_m^{(i)}$ is the $m$-th entry of $\bm{k^{(i)}}$, and $\bm{e_m}^{T} = (0,\cdots,0,\underbrace{1}_{m\text{-th}},0,\cdots,0)$.

    For the ease of understanding, Fig. \ref{2d} illustrates the state transition graph of a simple example with $M=2$, $K=3$, and $N=4$.

    \begin{figure}[!t]
    \centering
    \includegraphics[width=2.8in]{./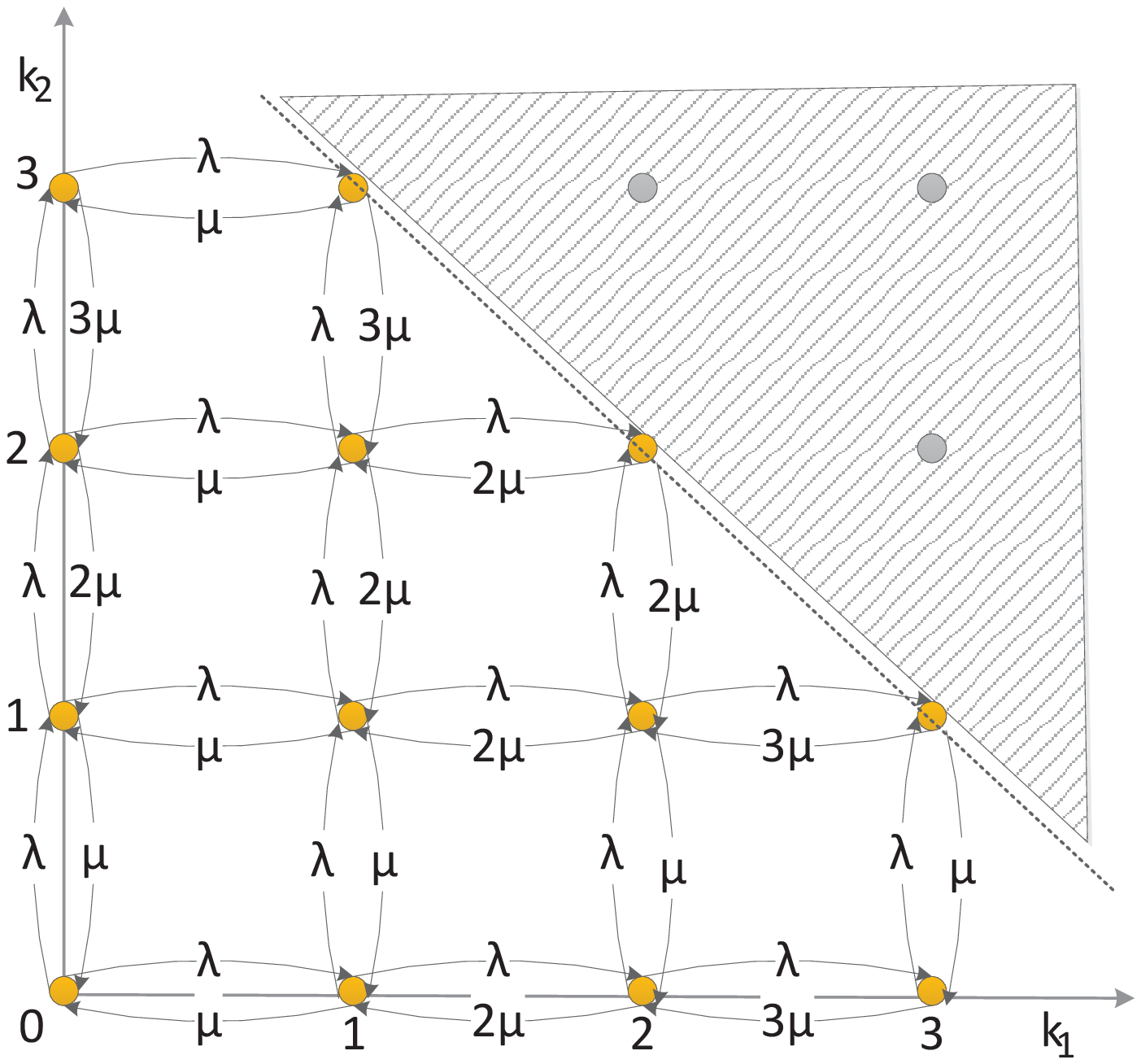}
    \caption{Transition graph of a pool with 2 {VBSs}. $K=3$, $N=4$.}
    \label{2d}
    \end{figure}

\section{Solution of The Proposed Model}
\label{sec:solution}
    \subsection{Stationary Distribution}
    A good property of $\bm{k}$ is the reversibility, since it can simplify the expression of the stationary distribution. The proof of reversibility proof can be found in \cite{kaufman81} for general cases, and is not repeated here.
    
    Since $\bm{k}$ is reversible, the local balance equation
    \begin{equation}\label{local}
    P(\bm{k^{(i)}}) \cdot q_{\bm{k^{(i)}}\bm{k^{(j)}}} = P(\bm{k^{(j)}}) \cdot q_{\bm{k^{(j)}}\bm{k^{(i)}}}
    \end{equation}
    holds for the statistical equilibrium of $\bm{k}$. Without loss of generality, let
    $$\bm{k^{(i)}} = (k_1,\cdots, k_m,\cdots,k_M)^T$$
    $$\bm{k^{(j)}} = (k_1,\cdots, k_m+1,\cdots,k_M)^T,$$
    and substitute (\ref{transition}) into (\ref{local}), we can get
    \begin{equation}\label{local2}
    \begin{aligned}
    & P(k_1,\cdots, k_m,\cdots,k_M) \cdot \lambda \\
    & = P(k_1,\cdots, k_m+1,\cdots,k_M) \cdot \left( k_m+1 \right) \mu
    \end{aligned}
    \end{equation}
    After a simple manipulation on (\ref{local2}), we have
    \begin{equation}\label{local3}
    \frac{P(k_1,\cdots, k_m+1,\cdots,k_M)}{P(k_1,\cdots, k_m,\cdots,k_M)}
    = \frac{a}{\left( k_m+1 \right)}
    \end{equation}
    where $a = \lambda / \mu.$ Clearly, this is a recursive equation for computing the stationary distribution. Continuing the recursion down to $0$ for the $m$-th entry, we can get
    \begin{equation}
    \begin{aligned}
    &P(k_1,\cdots, k_m,\cdots,k_M)\\
    &= P(k_1,\cdots,0,\cdots,k_M) \cdot \frac{a^{k_m}}{k_{m}!}
    \end{aligned}
    \end{equation}
    Then repeat the same process for other entries and we can get the expression for the stationary distribution of $\bm{k}$:
    \begin{equation}\label{static}
    P(\bm{k}) = P_0 \cdot \prod_{m=1}^{M}\frac{a^{k_m}}{k_m!}
    \end{equation}
    in which
    \begin{equation}
    P_0 = P(0,\cdots,0,\cdots,0)
    = \left( \sum_{\bm{k} \in \mathbb{K}}
    \prod_{m=1}^{M}\frac{a^{k_m}}{k_m!}\right)^{-1}
    \end{equation}
     can be derived from the fact that $$\sum_{\bm{k} \in \mathbb{K}}P(k_1,\cdots, k_m,\cdots,k_M) = 1$$

    \subsection{Blocking Probabilities}
    In our model, it is important that we can evaluate the blocking probability under arbitrary parameters. For ease of analysis, we first decompose the blocking events into two sets. We define the blocking events that are solely due to insufficient r-servers (i.e. $k_m^{-} = K, \sum_{i=1}^{M}k_i^{-}<N$) to be radio blocking, the blocking events that are due to insufficient c-servers (i.e. $\sum_{i=1}^{M}k_i^{-}=N$) to be computational blocking, and the union set of radio and computational blocking events to be overall blocking. Note that radio blocking and computational blocking are mutually exclusive. With above definition, we have overall blocking probability
    \begin{equation}
    P_{b} = P_{br} + P_{bc}
    \end{equation}
    with the probability of radio blocking
    \begin{equation}\label{pbr}
    \begin{aligned}
    P_{br}(N,M)
    &= \begin{cases}
    \sum\limits_{m=1}^{M}\frac{1}{M} \sum\limits_{\bm{k} \in \mathbb{K}_{<N}^{m,K} }P(\bm{k}),&N > K \\
    0 , & N \le K
    \end{cases}\\
    &= \begin{cases}
    \frac{P_0}{M} \sum\limits_{m=1}^{M}\sum\limits_{\bm{k} \in \mathbb{K}_{<N}^{m,K} }\prod\limits_{m=1}^{M}\frac{a^{k_m}}{k_m!},&N > K \\
    0 , & N \le K
    \end{cases}
    \end{aligned}
    \end{equation}
    and the probability of computational blocking
    \begin{equation}\label{pbc}
    \begin{aligned}
    P_{bc}(N,M)
    &= \sum_{\bm{k} \in \mathbb{K}_{=N}}P(\bm{k})\\
    &= P_0 \cdot \sum_{\bm{k} \in \mathbb{K}_{=N}}\prod_{m=1}^{M}\frac{a^{k_m}}{k_m!}
    \end{aligned}
    \end{equation}
    where
    $$\mathbb{K}_{=N} = \{ \bm{k} | k_1 +\cdots,k_M=N  \}$$
    $$\mathbb{K}_{<N}^{m,K} = \{ \mathbb{K}^{m,K} \cap \mathbb{K}_{<N}\}$$
    $$\mathbb{K}_{<N} = \{ \bm{k} | k_1 +\cdots,k_M<N  \}$$
    $$\mathbb{K}^{m,K} = \{ \bm{k} | k_m = K\}.$$
    Note that (\ref{static}) is symmetric in any two of its arguments:
    \begin{equation}
    \label{symm}
    P(\cdots, k_{i},\cdots,k_{j},\cdots) = P(\cdots, k_{j},\cdots,k_{i},\cdots),
    \end{equation}
    then (\ref{pbr}) can be simplified for $N > K$ as:
    \begin{equation}\label{pbcs}
    \begin{aligned}
    P_{br}(N,M)
    &= \frac{M}{M}\sum\limits_{\bm{k} \in \mathbb{K}_{<N}^{1,K}}P(\bm{k})\\
    &= P_0 \cdot \sum\limits_{\bm{k} \in \mathbb{K}_{<N}^{1,K}} \prod_{m=1}^{M}\frac{a^{k_m}}{k_m!}\\
    &= P_0 \cdot \frac{a^K}{K!} \cdot \sum\limits_{\bm{k} \in \mathbb{K}_{<N}^{1,K}}\prod_{m=2}^{M}\frac{a^{k_m}}{k_m!}
    \end{aligned}
    \end{equation}

    \subsection{Large Pool Limit}\label{subsec:large}
    As the number of pooled {VBSs} $M \to \infty$, the computational resource utilization ratio will approach a limit, which is described in the following theorem.
    \begin{lpl}[Large Pool Limit]
        The asymptotic utilization ratio of computational resources $\eta^*$ when $N = MK$ and $M \to \infty$ is bounded with
            \begin{equation}
            \frac{a \left( 1-P_b^{th}\right)}{K} \le \eta^* \le  \frac{a}{K}
            \end{equation}
            where $P_b^{th}$ is the desired threshold for total blocking probability.
    \end{lpl}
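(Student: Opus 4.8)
The plan is to write the computational-resource utilization ratio $\eta$ in closed form via a rate-conservation (flow-balance) argument and then to sandwich it using the trivial bound $P_b\ge 0$ and the QoS bound $P_b\le P_b^{th}$. Set $\eta=\mathbb{E}[\sum_{m=1}^{M}k_m]/N$, the expected number of busy c-servers divided by the total number $N$, the expectation being taken under the stationary distribution (\ref{static}). Now invoke rate conservation for the total occupancy $X=\sum_{m=1}^{M}k_m$: in statistical equilibrium the mean rate at which $X$ decreases equals the mean rate at which it increases. The former is $\mu\,\mathbb{E}[X]$, since each of the $X$ active sessions completes at rate $\mu$. The latter is the mean accepted-arrival rate: sessions reach VBS $m$ at rate $\lambda$, and by PASTA an arrival at VBS $m$ is rejected with the stationary per-VBS blocking probability; by the symmetry (\ref{symm}) of (\ref{static}) this probability is the same for every $m$, and --- since for a tagged VBS the radio-blocking event $\{k_m=K,\ \sum_i k_i<N\}$ and the computational-blocking event $\{\sum_i k_i=N\}$ are disjoint and together form its blocking event --- it equals $P_b=P_{br}+P_{bc}$ (cf.\ (\ref{pbr}) and (\ref{pbc})). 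Hence $\mu\,\mathbb{E}[X]=M\lambda(1-P_b)$, i.e.\ $\mathbb{E}[X]=Ma(1-P_b)$ with $a=\lambda/\mu$.

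Substituting $N=MK$ then gives $\eta=\dfrac{Ma(1-P_b)}{MK}=\dfrac{a(1-P_b)}{K}$, which is in fact independent of $M$ --- consistent with the observation that when $N=MK$ the constraint $\sum_{m}k_m\le N$ is never binding, so (\ref{static}) factorizes and the $M$ VBSs behave as independent Erlang loss systems. The two bounds are now immediate: $P_b\ge 0$ gives $\eta\le a/K$, while the standing assumption that the configuration meets its QoS target, $P_b\le P_b^{th}$, gives $\eta\ge a(1-P_b^{th})/K$; letting $M\to\infty$ (vacuous here, since $\eta$ does not depend on $M$) yields $\tfrac{a(1-P_b^{th})}{K}\le\eta^*\le\tfrac{a}{K}$.

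I expect the only delicate step to be the rate-conservation bookkeeping --- in particular, verifying that the blocking probability experienced by a Poisson arrival at a tagged VBS is exactly $P_b$ as defined through the $P_{br}/P_{bc}$ split; this rests on PASTA, the symmetry of (\ref{static}), and the disjointness of the radio- and computational-blocking events. Everything else is elementary algebra. One point worth stating explicitly in the write-up is the standing assumption $P_b\le P_b^{th}$: it is precisely what makes the lower bound meaningful, and without a QoS requirement only the upper bound $\eta^*\le a/K$ survives.
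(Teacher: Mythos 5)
Your proof is correct, but it reaches the key identity by a different route than the paper. The paper's argument is: when $N=MK$ the constraint $\sum_m k_m\le N$ is never active, so the product form (\ref{static}) factorizes into $M$ independent truncated-Poisson (Erlang loss) marginals; it then computes $E[k_m]=a\bigl(1-\tfrac{a^K}{K!}(\sum_{i=0}^{K}\tfrac{a^i}{i!})^{-1}\bigr)$ by direct summation, identifies the subtracted term as the Erlang-B blocking probability bounded by $P_b^{th}$, and finally invokes the law of large numbers to pass from $E[k_m]$ to the limit of the empirical ratio $\tfrac{1}{MK}\sum_m k_m$. You instead obtain $E[\sum_m k_m]=Ma(1-P_b)$ for \emph{arbitrary} $N$ via flow balance plus PASTA, which is more general (it yields $\eta=a(1-P_b)M/N$ for any provisioning level, not only $N=MK$) and avoids manipulating the explicit distribution; your verification that the per-VBS blocking probability seen by an arrival equals $P_b=P_{br}+P_{bc}$ (symmetry of (\ref{symm}) plus disjointness of the two blocking events) is exactly the bookkeeping needed and is sound. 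What your route gives up is the concentration statement: you define $\eta$ as $E[\sum_m k_m]/N$, which is deterministic for every $M$, so the limit $M\to\infty$ is indeed vacuous in your framing, whereas the paper's LLN step is what justifies reading $\eta^*$ as the almost-surely realized utilization of a large pool rather than merely its mean. If you want your write-up to match the theorem as the paper intends it, add one sentence noting that by independence of the $k_m$ (valid precisely because $N=MK$) the empirical ratio concentrates on its mean as $M\to\infty$; the bounds $0\le P_b\le P_b^{th}$ then transfer unchanged.
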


    \begin{proof}
        When c-servers are sufficiently-provisioned, i.e. $N = MK$, the entries of $\bm{k}$ are i.i.d. According to the law of large numbers, for any positive number $\epsilon$
            \begin{equation}\label{p}
            \lim_{M \to \infty} P(\left| \frac{\sum_{m=1}^{M}k_m}{M} - E\left[k_m\right] \right| > \epsilon) = 0
            \end{equation}
        In other word, the average number of sessions across all the {VBSs} in the pool will approach the average number of sessions in any of these {VBSs}, which can be calculated as
            \begin{equation}
            \begin{aligned}
            E\left[k_m\right]
            = \frac{\sum\limits_{i=0}^{K} i \cdot \frac{a^i}{i!}}{\sum\limits_{i=0}^{K}\frac{a^i}{i!}} =
            a \frac{\sum\limits_{i=0}^{K-1} \frac{a^i}{i!}}{\sum\limits_{i=0}^{K}\frac{a^i}{i!}}
            =a \cdot \left( 1- \frac{a^K}{K!}\left( \sum\limits_{i=0}^{K}\frac{a^i}{i!}\right)^{-1}\right)
            \end{aligned}
            \end{equation}

        Now that c-servers are sufficiently-provisioned, there can only be radio blocking. Hence, overall blocking probability equals to radio blocking probability. Also notice that
        $\frac{a^K}{K!}\left( \sum\limits_{i=0}^{K}\frac{a^i}{i!}\right)^{-1}$
        is the radio blocking probability of any of the {VBSs}, we have
            \begin{equation}\label{pb}
            \frac{a^K}{K!}\left( \sum\limits_{i=0}^{K}\frac{a^i}{i!}\right)^{-1} \le  P_b^{th}.
            \end{equation}

        The average number of sessions across all the {VBSs} can then be bounded with
        \begin{equation}
        a \cdot \left( 1-P_b^{th}\right) \le E\left[k_m\right] \le a.
        \end{equation}

        Combining (\ref{p}) and (\ref{pb}) and we get the utilization of the {VBS} pool when $M$ approaches infinity
        \begin{equation}
        \eta^* = \frac{M \cdot E\left[k_m\right]}{M K} = \frac{E\left[k_m\right]}{K} \in \frac{1}{K} \left[a \left( 1-P_b^{th}\right), a \right]
        \end{equation}
    \end{proof}

    The reason that we only investigate the case in which $N = MK$ is that, only the large pool limit when $N = MK$ truly reflects the potential of pool gain. Any $N > MK$ will over-provision c-servers and will not increase the actual pooling gain, while any $N < MK$ will exploit the potential of pooling gain to some extent. This theorem indicates that when the number of {VBSs} in the pool becomes very large, $(1-\eta^*)MK$ c-servers are idle for almost all the time.

    \section{Method for Recursive Evaluation} \label{recur}
    For the ease of numerical evaluation, we next give a recursive method for the computation of $P_{br}$ and $P_{bc}$. First, we define two auxiliary functions:
    \begin{equation}\label{cut}
    C(N,M) = \sum\limits_{\bm{k^i} \in \mathbb{K}_{=N}}\prod_{m=1}^{M}\frac{a^{k_m^i}}{k_m^i!}
    \end{equation}
    \begin{equation}\label{remain}
    R(N,M) = \sum\limits_{\bm{k^i} \in \mathbb{K}_{<N}}\prod_{m=1}^{M}\frac{a^{k_m^i}}{k_m^i!}.
    \end{equation}

    With (\ref{cut}) and (\ref{remain}), (\ref{pbr}) and (\ref{pbcs}) can be expressed as
    \begin{equation}
    \begin{aligned}
    P_{br}(N,M)
    &= \begin{cases}
     \frac{P_0 \cdot a^{K}}{K!}  R(N-K,M-1), &N > K\\
    0, & N \le K
    \end{cases}
    \end{aligned}
    \end{equation}

    \begin{equation}
    P_{bc}(N,M) = P_0 \cdot C(N,M)
    \end{equation}

    with
    \begin{equation}
    P_0 = \left( R(N + 1, M)\right)^{-1}.
    \end{equation}

    For the computation of $C(N,M)$ and $R(N,M)$, we can find the following recursive relationships:
    \begin{equation}
    \begin{aligned}
    C(N,M)
    &= \begin{cases}
    \frac{a^{N}}{N!}, & M=1 \\
    \begin{aligned}
    \sum\limits_{i=N_1}^{N_2}\frac{a^{i}}{i!}C(N-i,M-1), \end{aligned} & M>1
    \end{cases}
    \end{aligned}
    \end{equation}

    \begin{equation}
    \begin{aligned}
    R(N,M)
    & = \begin{cases}
    0, & N = 1 \\
    \begin{aligned}&R(N+1,M) \\&{ }-C(N  ,M),\end{aligned}   & 2 \le N \le MK\\
    \begin{aligned}&R(N-1,M) \\&{ }+C(N-1,M),\end{aligned} & 2 \le N \le MK\\
    \left( \sum_{i=1}^{K}\frac{a^i}{i!} \right)^M, & N = MK+1.
    \end{cases}
    \end{aligned}
    \end{equation}
    where $N_1 = \max{(0,N-(M-1)K)}$ and $N_2 = \min{(K,N)}$. These relationships can speed-up the computation of $C(N,M)$ and $R(N,M)$ by re-utilizing cached results.

    \section{Numerical Results and Discussions}\label{sec:numerical}
    In this section, we give the numerical results based on the proposed model and discuss their implications on realistic system design. Because radio resources are much more scarce than computational resources, it is reasonable to save as much radio resources as possible, even at the expense of more expenditure on computational resources. Therefore, for a given traffic load $a$ and blocking probability threshold $P_b^{th}$, we first find the minimum number of r-servers $K$ such that (\ref{pb}) holds with sufficient c-servers. After determining $K$, we decrease the number of c-servers from its maximum value ($M K$) down to $0$ and investigate the session blocking probability to determine the maximum statistical multiplexing gain.
    \subsection{Basic Blocking Characteristics}

    \begin{figure}[!t]
    \centering
    \includegraphics[width=3.5in]{./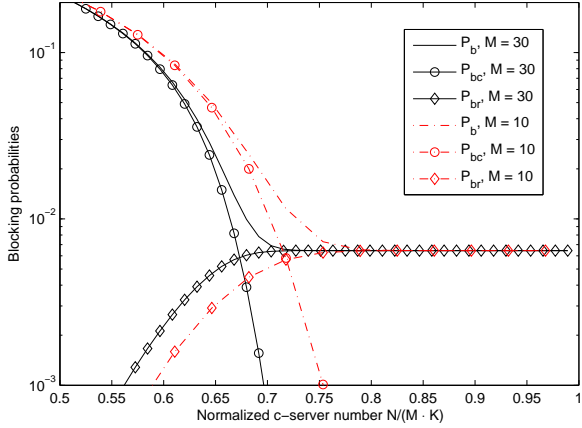}
    \caption{$P_b$, $P_{bc}$, and $P_{br}$ versus normalized c-server number when $M = 10$ and $M = 30$. $a=17.8$, $P_b^{th}=10^{-2}$, and $K=28$.}
    \label{fig1}
    \end{figure}

    Fig. \ref{fig1} shows $P_b$, $P_{br}$, and $P_{bc}$ against different number of c-servers. We see as long as the decrease in $N$ is small and do not pass a certain ``knee point'', $P_b$ will remain almost the same and is dominated by radio blocking. Yet after this ``knee point'', $P_{bc}$ starts to overwhelm $P_{br}$, resulting in an exponential increase in $P_b$. There exist a plain region on the right side of the ``knee point'' because there are only little number of states in $\mathbb{K}_{=N}$ and out of $\mathbb{K}_{<N}^{m,K}$ when $N$ is large, so the magnitude of the resulting product terms in (\ref{pbr}) and (\ref{pbc}) are negligible, i.e. $P_{br} \approx P_b$ and $P_{bc} \approx 0$. This observation implies that we can turn off some c-servers in the {VBS} pool while keeping the {QoS} almost unchanged. This coincides with the statistical multiplexing gain observed in previous work.

    \subsection{Decreasing Marginal Pooling Gain}

    \begin{figure}[!t]
    \centering
    \includegraphics[width=3.5in]{./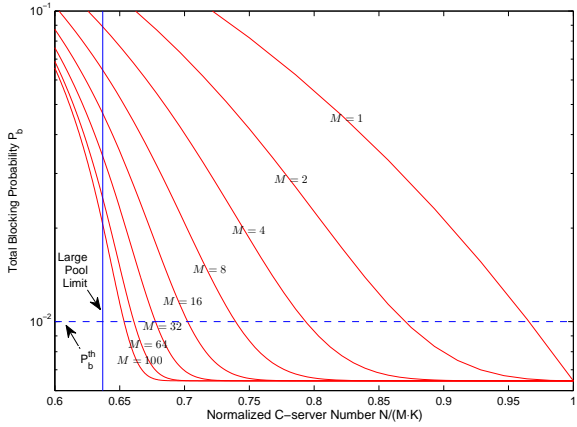}
    \caption{$P_b$ versus normalized c-server number under different pool size $M$. $a=17.8$, $P_b^{th}=10^{-2}$, and $K=28$.}
    \label{fig2}
    \end{figure}

    \begin{figure*}[!t]
    \centerline{\subfloat[$P_b^{th} = 10^{-2}$]{\includegraphics[width=3.5in]{./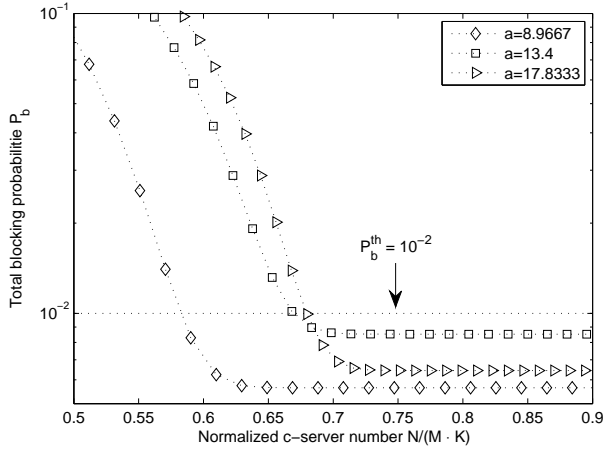}%
    \label{fig3a}}
    \hfil
    \subfloat[$P_b^{th} = 3*10^{-2}$]{\includegraphics[width=3.5in]{./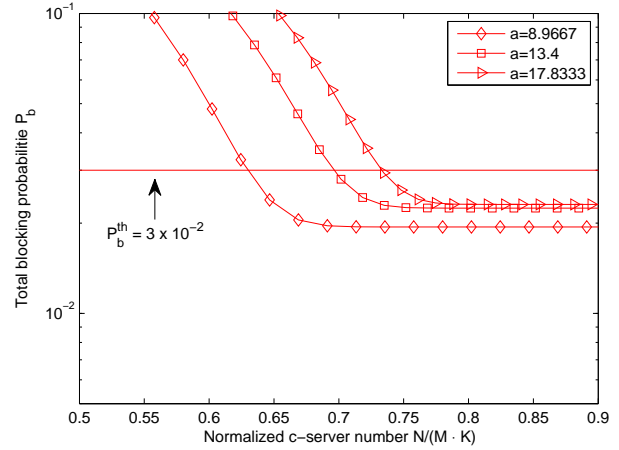}%
    \label{fig3b}}}
    \caption{$P_b$ versus normalized c-server number under different traffic load and {QoS} level. $M=30$.}
    \label{fig3}
    \end{figure*}

    Fig. \ref{fig2} shows the overall blocking probability $P_b$ against varying $N$ under different pool size $M$. According to the discussion in \ref{subsec:large}, we also plot the approximated large pool limit. We find a medium sized {VBS} pool can readily obtain considerable statistical multiplexing gain (the ``knee points'' of $M = 64$ and $M = 100$ are almost the same). Also, the marginal statistical multiplexing gain diminishes fast (even an exponential increase ($2^i$) in $M$ can not keep up with it). Thus a huge number of {VBSs} is needed so that the pooling gain can approach the large pool limit. These observations imply that a {C-RAN} formed with multiple medium sized {VBS} pools can obtain almost the same pooling gain as the one formed with a single huge pool. If we further take the expenditure of fronthaul network into consideration, the former choice may be far more economical than the latter one.

    \subsection{Influence of Traffic Load and {QoS} Level}
    The statistical multiplexing gain of a {VBS} pool also depends on the traffic load and the desired {QoS} level. We illustrate these dependencies in Fig. \ref{fig3}. Firstly we can see that the increase in traffic load will reduce the pooling gain by pushing the ``knee point'' to the right, which is a direct result of the way we reserve radio resources as stated in the beginning of this section. This observation indicates that, in order to get a satisfactory pooling gain under fluctuating traffic load, we may need to dynamically adjust the size of {VBS} pools. Also, by contrasting Fig. \ref{fig3a} and Fig. \ref{fig3b}, we can find that stricter {QoS} requirements can increase the pooling gain. This is because on one hand, we need to increase the number of r-servers $K$ in order to reduce the blocking probability, and this will increase $M K$; On the other hand, the average number of r-servers occupied in a {VBS} is always around $a$. Therefore the stricter the {QoS}, the more idle r-servers and c-servers there will be in the {VBS} pool. Hence when designing realistic {VBS} pools, one may also need to take {QoS} into consideration.

\section{Conclusion and Future Work}
\label{sec:conclusion}
In this article, we propose a multi-dimensional Markov model for {VBS} pools to analyze their statistical multiplexing gain. We showed that the proposed model have a product-form expression for its stationary distribution. Based on this expression, we also derived a recursive method for calculating the blocking probability of a {VBS} pool, which can speed-up evaluation. We further derive the limit of pooling gain as the pool size $M \to \infty,$ and found that the asymptotic pooling gain can be easily approximated as the ratio between traffic load $a$ and the number of r-servers $K$. With numerical results, we found that: 1) the proposed model presents a pooling gain which coincides with the observations in previous work; 2) the pooling gain reaches a significant level even with medium pool size, and the marginal gain of larger pool size is negligible; 3) lighter traffic load and more strict {QoS} level can increase the pooling gain.

Our problem formulation may be extended for cooperative processing senario such as coordinated multipoint transmission ({CoMP}) by allowing a coopative user-session to occupy r-servers of different VBSs and more than one c-server. This will however present more complicated transition properties and need further consideration. Besides, although we have derived a closed-form expression for the stationary distribution, we do not have a closed-form expression for the blocking probability and statistical multiplexing gain, hence their evaluation requires intensive computation. Closed-form expressions for the evaluation of the exact or approximated values call for further research. Also, the proposed model only reflects the statistical multiplexing gain due to session-level randomness, while realistic systems also exhibit certain randomness due to other factors such as the processing of baseband tasks. The significance of other factors require further analysis. What's more, the Markovian property of proposed model is a simplification of realistic {VBS} pools. The accuracy of this assumption awaits the verification based on realistic data. More complex models may be needed to better reflect the mechanisms of realistic {VBS} pools.

\section*{Acknowledgment}
This  work  is  sponsored  in  part  by  the  {National Basic  Research  Program  of China  (973 Program: 2012CB316001}), {the National Science Foundation of China (NSFC) under grant No. 61201191 and No. 61401250}, {the Creative Research Groups of NSFC under grant No. 61321061}, and {Intel Collaborative Research Institute for Mobile Networking and Computing}.

\ifCLASSOPTIONcaptionsoff
  \newpage
\fi

\IEEEtriggeratref{7}
\bibliographystyle{IEEEtran}
\bibliography{IEEEabrv,myref}

\begin{thebibliography}{10}
\providecommand{\url}[1]{#1}
\csname url@samestyle\endcsname
\providecommand{\newblock}{\relax}
\providecommand{\bibinfo}[2]{#2}
\providecommand{\BIBentrySTDinterwordspacing}{\spaceskip=0pt\relax}
\providecommand{\BIBentryALTinterwordstretchfactor}{4}
\providecommand{\BIBentryALTinterwordspacing}{\spaceskip=\fontdimen2\font plus
\BIBentryALTinterwordstretchfactor\fontdimen3\font minus
  \fontdimen4\font\relax}
\providecommand{\BIBforeignlanguage}[2]{{%
\expandafter\ifx\csname l@#1\endcsname\relax
\typeout{** WARNING: IEEEtran.bst: No hyphenation pattern has been}%
\typeout{** loaded for the language `#1'. Using the pattern for}%
\typeout{** the default language instead.}%
\else
\language=\csname l@#1\endcsname
\fi
#2}}
\providecommand{\BIBdecl}{\relax}
\BIBdecl

\bibitem{vni}
{Forecast, Cisco VNI}, ``Cisco {V}isual {N}etworking {I}ndex: Global mobile
  data traffic forecast update, 2013-2018,'' 2013.

\bibitem{cran}
{China Mobile}, ``C-{RAN}: The road towards green {RAN},'' 2011.

\bibitem{wnc}
Y.~Lin, L.~Shao, Z.~Zhu, Q.~Wang, and R.~K. Sabhikhi, ``Wireless network cloud:
  Architecture and system requirements,'' \emph{IBM Journal of Research and
  Development}, vol.~54, no.~1, pp. 4--1, 2010.

\bibitem{ngmn}
{NGMN Alliance}, ``Suggestions on potential solutions to {C-RAN},'' 2013.

\bibitem{colony}
S.~Namba, T.~Matsunaka, T.~Warabino, S.~Kaneko, and Y.~Kishi, ``Colony-ran
  architecture for future cellular network,'' in \emph{Future Network \& Mobile
  Summit (FutureNetw), 2012}.\hskip 1em plus 0.5em minus 0.4em\relax IEEE,
  2012, pp. 1--8.

\bibitem{fluidnet}
K.~Sundaresan, M.~Y. Arslan, S.~Singh, S.~Rangarajan, and S.~V. Krishnamurthy,
  ``Fluidnet: a flexible cloud-based radio access network for small cells,'' in
  \emph{Proceedings of the 19th annual international conference on Mobile
  computing \& networking}.\hskip 1em plus 0.5em minus 0.4em\relax ACM, 2013,
  pp. 99--110.

\bibitem{vbs}
Z.~Zhu, P.~Gupta, Q.~Wang, S.~Kalyanaraman, Y.~Lin, H.~Franke, and S.~Sarangi,
  ``Virtual base station pool: towards a wireless network cloud for radio
  access networks,'' in \emph{Proceedings of the 8th ACM International
  Conference on Computing Frontiers}.\hskip 1em plus 0.5em minus 0.4em\relax
  ACM, 2011, p.~34.

\bibitem{cloudiq}
S.~Bhaumik, S.~P. Chandrabose, M.~K. Jataprolu, G.~Kumar, A.~Muralidhar,
  P.~Polakos, V.~Srinivasan, and T.~Woo, ``Cloud{IQ}: a framework for
  processing base stations in a data center,'' in \emph{Proceedings of the 18th
  annual international conference on Mobile computing and networking}.\hskip
  1em plus 0.5em minus 0.4em\relax Istanbul, Turkey: ACM, 2012, pp. 125--136,
  2348561.

\bibitem{bigstation}
Q.~Yang, X.~Li, H.~Yao, J.~Fang, K.~Tan, W.~Hu, J.~Zhang, and Y.~Zhang,
  ``Big{S}tation: enabling scalable real-time signal processingin large mu-mimo
  systems,'' in \emph{Proceedings of the ACM SIGCOMM 2013 conference}.\hskip
  1em plus 0.5em minus 0.4em\relax Hong Kong, China: ACM, 2013, pp. 399--410,
  2486016.

\bibitem{cpri}
{CPRI Cooperation}, ``{CPRI} specification v6.0; interface specification,''
  2013.

\bibitem{gomez13}
I.~Gomez-Miguelez, V.~Marojevic, and A.~Gelonch, ``Deployment and management of
  sdr cloud computing resources: problem definition and fundamental limits,''
  \emph{EURASIP Journal on Wireless Communications and Networking}, vol. 2013,
  no.~1, pp. 1--11, 2013.

\bibitem{kaufman81}
J.~Kaufman, ``Blocking in a shared resource environment,''
  \emph{Communications, IEEE Transactions on}, vol.~29, no.~10, pp. 1474--1481,
  Oct 1981.

\end{thebibliography}

\end{document}